\documentclass[a4paper]{llncs}


\usepackage[utf8]{inputenc}
\usepackage[english]{babel}
\usepackage[T1]{fontenc}
\usepackage{IEEEtrantools}
\usepackage[colorlinks=true,citecolor=black,linkcolor=black,urlcolor=black]{hyperref}
\usepackage{epsfig}
\usepackage{amsmath, amssymb, amsbsy}
\usepackage{xspace} 
\usepackage{color}
\usepackage{url}             
\usepackage{cite}            
\usepackage{booktabs} 
\usepackage{parskip} 
\usepackage{enumitem}
\usepackage{tikz}
\usepackage[linesnumbered,ruled,vlined,titlenumbered]{algorithm2e}
\usepackage{mathtools}
\usepackage{xparse}
\usepackage{url}


\def\ve#1{{\mathchoice{\mbox{\boldmath$\displaystyle #1$}}%
              {\mbox{\boldmath$\textstyle #1$}}%
              {\mbox{\boldmath$\scriptstyle #1$}}%
              {\mbox{\boldmath$\scriptscriptstyle #1$}}}}

\renewcommand{\H}{\mathcal{H}}

\newcommand{\NN}{\mathbb{N}}

\renewcommand{\L}{\mathcal{L}}

\newcommand{\Fq}{\mathbb{F}_{q}}

\renewcommand{\H}{\mathcal{H}}

\newcommand{\Fqtwo}{\mathbb{F}_{q^2}}

\newcommand{\Pinf}{{P_\infty}}
\newcommand{\Pset}{\mathcal{P}}
\newcommand{\Pstar}{\mathcal{P}^\ast}

\renewcommand{\r}{\ve r}
\renewcommand{\c}{\ve c}
\newcommand{\e}{\ve e}

\newcommand{\CHerm}{\mathcal{C}_\mathcal{H}}
\newcommand{\Eset}{\mathcal{E}}
\newcommand{\degH}{\deg_{\mathcal{H}}}

\newcommand{\Rback}{\mathcal{R}}

\newcommand{\vr}{{\ve \nu}}
\newcommand{\mr}{{\ve \mu}}
\newcommand{\XiM}{{\ve \Xi}}

\newcommand{\FX}{\Fqtwo[X]}

\newcommand{\asummands}[1]{a^{(#1)}}
\newcommand{\AM}[1]{\Amatrix^{(#1)}}
\newcommand{\AMe}[1]{A^{(#1)}}
\newcommand{\Lambdai}[1]{\Lambda^{(#1)}}
\newcommand{\lambdai}[1]{\lambda^{(#1)}}

\newcommand{\Psii}[1]{\Psi^{(#1)}}
\newcommand{\psii}[1]{\psi^{(#1)}}

\newcommand{\Zmatrix}{\ve 0}
\newcommand{\Amatrix}{\ve A}

\newcommand{\ddesigned}{d^\ast}

\newcommand{\wtH}{\mathrm{wt}_\mathrm{H}}

\newcommand{\taunew}{\tau_\mathrm{new}}
\newcommand{\PfailPower}{\mathrm{\hat{P}}_\mathrm{fail, IPD}}
\newcommand{\PfailGS}{\mathrm{\hat{P}}_\mathrm{fail, GS}}

\newcommand{\NumEq}{E}
\newcommand{\NumVar}{V}

\renewcommand{\i}{\ve{i}}
\renewcommand{\j}{\ve{j}}

\newcommand{\SolSpace}{\mathcal{V}}

\title{Improved Power Decoding of One-Point Hermitian Codes}
\author{Sven Puchinger\inst{1} \and Irene Bouw\inst{2} \and Johan Rosenkilde n\'e Nielsen\inst{3}}
\institute{Institute of Communications Engineering, Ulm University, Ulm, Germany\\
\email{sven.puchinger@uni-ulm.de}
\and
Institute of Pure Mathematics, Ulm University, Ulm, Germany\\
\email{irene.bouw@uni-ulm.de}
\and
Department of Applied Mathematics \& Computer Science, Technical University of Denmark, Lyngby, Denmark\\
\email{jsrn@jsrn.dk}}

\begin{document}

\maketitle

\begin{abstract}
We propose a new partial decoding algorithm for one-point Hermitian codes that can decode up to the same number of errors as the Guruswami--Sudan decoder.
Simulations suggest that it has a similar failure probability as the latter one.
The algorithm is based on a recent generalization of the power decoding algorithm for Reed--Solomon codes and does not require an expensive root-finding step. In addition, it promises improvements for decoding interleaved Hermitian codes.
\end{abstract}

\section{Introduction}

One-point Hermitian ($1$-H) codes are algebraic geometry codes that can be decoded beyond half the minimum distance.
Most of their decoders are conceptually similar to their Reed--Solomon (RS) code analogs, such as the \emph{Guruswami--Sudan} (GS) algorithm \cite{guruswami1998improved} and \emph{power decoding} (PD) \cite{schmidt2010syndrome,kampf2012decoding,nielsen2015sub}.
For both RS and $1$-H codes, PD is only able to correct as many errors as the Sudan algorithm, which is a special case of the GS algorithm.
Recently \cite{nielsen2016power}, PD for RS codes was improved to correct as many errors as the GS algorithm.

In this paper, we combine the idea of improved power decoding (IPD) for RS codes from \cite{nielsen2016power} with the description of PD for $1$-H codes from \cite{nielsen2015sub} in order to obtain an IPD algorithm for $1$-H codes.
Similar to the RS case, we derive a larger system of non-linear key equations (cf.~Section~\ref{sec:key_equations}) than in classical PD and reduce the decoding problem to a linear Pad\'e approximation problem whose solution is likely to agree with the solution of the system of key equations (cf.~Section~\ref{sec:pade_approximation}).
Using a linear-algebraic argument, we derive an upper bound on the maximum number of errors which can possibly be corrected by the decoder (cf.~Section~\ref{sec:decoding_radius}).
In Section~\ref{sec:complexity}, we show that the algorithm can be implemented with sub-quadratic complexity in the code length $n$.
Finally, we present simulation results for various code and decoder parameters which indicate that the new IPD algorithm has a similar failure probability as the GS algorithm for the same parameters and decoding radius (cf.~Section~\ref{sec:numerical_results}).

Besides the theoretical interest in having different decoding paradigms, we see two advantages of the new decoder:
Firstly, the algorithm does not require a root-finding step, which is often considered to be computationally heavy, especially in practical implementations, see e.g.~\cite{ahmed2011vlsi}.
Secondly, the IPD algorithm for RS codes \cite{nielsen2016power} was recently generalized to interleaved RS codes \cite{puchinger2017decoding}, where it improves upon existing decoding algorithms at all rates, including those methods which are based on the GS decoder. It is reasonable to assume that a similar generalization is also possible for $1$-H codes.

\section{Preliminaries}

Let $q$ be a prime power. We follow the notation of \cite{nielsen2015sub}.
The \emph{Hermitian curve} $\H/\Fqtwo$ is the smooth projective plane curve defined by the affine equation $Y^q+Y=X^{q+1}$.
The curve $\H(\Fqtwo)$ has genus $g = \tfrac{1}{2}q(q-1)$ and $q^3+1$ many $\Fqtwo$-rational points $\Pset = \{P_1,\dots,P_{q^3},\Pinf\}$, where $\Pinf$ denotes the point at infinity.
We define $\Rback := \cup_{m\geq0} \L(m\Pinf) = \Fqtwo[X,Y]/(Y^q+Y-X^{q+1})$, which has an $\Fqtwo$-basis of the form $\{ X^iY^j : 0 \leq i, 0 \leq j < q\}$. The order function $\degH : \Rback \to \NN_0 \cup \{-\infty\}, f \mapsto -v_\Pinf(f)$ is defined by the valuation $v_\Pinf$ at $\Pinf$. As a result, we have $\degH(X^iY^j) = i q + j(q+1)$.

Let $n=q^3$ and $m \in \NN$ with $2(g-1) < m <n$. The \emph{one-point Hermitian code} of length $n$ and parameter $m$ over $\Fqtwo$ is defined by
\begin{align*}
\CHerm = \left\{ \left( f(P_1) , \dots , f(P_n) \right) : f \in \L(m\Pinf) \right\}.
\end{align*}
The dimension of $\CHerm$ is given by $k=m-g+1$ and the minimum distance $d$ is lower-bounded by the \emph{designed minimum distance} $\ddesigned := n-m$.

\section{System of Key Equations}
\label{sec:key_equations}

In this section, we derive the system of key equations that we need for decoding, using the same trick as \cite{nielsen2016power} for Reed--Solomon codes.
We use the description of power decoding for one-point Hermitian codes as in \cite{nielsen2015sub}.
Suppose that the received word is $\r = \c + \e \in \Fqtwo^n$, consisting of an error $\e = (e_1,\dots,e_n)$ and a codeword $\c \in \CHerm$, which is obtained from the \emph{message polynomial} $f \in \L(m\Pinf)$. We denote the set of \emph{error positions} by $\Eset = \{i : e_i \neq 0\}$.

In the following sections we show how to retrieve the message polynomial $f$ from the received word $\r$ if the \emph{number of errors}, the Hamming weight $\wtH(\e) = |\Eset|$ of the error, does not exceed a certain decoding radius $\tau$, which depends on the parameters of the decoding algorithm.

A non-zero polynomial $\Lambda \in \L\left(-\sum_{i \in \Eset} P_i + \infty \Pinf\right)$ is called \emph{error locator}. It is well-known that there is an error locator of degree $|\Eset| \leq \degH \Lambda \leq |\Eset|+g$, cf.~\cite{nielsen2015sub}.
and that any error locator fulfills $\degH(\Lambda) \geq |\Eset|$ (cf.~\cite{nielsen2015sub}).
In this section, let $\Lambda$ be some error locator.

\begin{lemma}[\!\!{\cite[Lemma~6]{nielsen2015sub}}]
\label{lem:IP}
There is a polynomial $R \in \Rback$ with $\degH(R)<n+2g$ that satisfies $R(P_i) = r_i$ for all $P_i \in \Pstar$.
\end{lemma}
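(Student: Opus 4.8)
The plan is to prove the statement by a dimension count for the evaluation map, applying the Riemann--Roch theorem twice. The bound $\degH(R) < n+2g$ says precisely that we may search for $R$ inside the Riemann--Roch space $\L\big((n+2g-1)\Pinf\big)$, i.e.\ among the functions in $\Rback$ whose only pole is at $\Pinf$ and of order at most $n+2g-1$. Interpolating the $n$ prescribed values $r_i$ at the $n$ points of $\Pstar = \{P_1,\dots,P_n\}$ then amounts to showing that the $\Fqtwo$-linear evaluation map
\[
\mathrm{ev}\colon \L\big((n+2g-1)\Pinf\big) \longrightarrow \Fqtwo^{\,n}, \qquad f \mapsto \big(f(P_1),\dots,f(P_n)\big),
\]
is surjective. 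Any preimage $R$ of $\r$ under $\mathrm{ev}$ then satisfies $\degH(R) \le n+2g-1 < n+2g$ and lies in $\Rback$, as required.

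First I would compute the dimension of the source. Since $n=q^3 \ge 1$, the divisor $(n+2g-1)\Pinf$ has degree $n+2g-1 > 2g-2$, so Riemann--Roch gives $\dim \L\big((n+2g-1)\Pinf\big) = (n+2g-1)-g+1 = n+g$. Next I would identify the kernel of $\mathrm{ev}$: a function lies in it exactly when it vanishes at every $P_i$, i.e.\ when it belongs to $\L\big((n+2g-1)\Pinf - \sum_{i=1}^n P_i\big)$. This divisor has degree $(n+2g-1)-n = 2g-1$, which is again strictly larger than $2g-2$, so Riemann--Roch applies once more and yields kernel dimension $(2g-1)-g+1 = g$. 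By the rank--nullity theorem the image has dimension $(n+g)-g = n$, so $\mathrm{ev}$ is surjective and an interpolating $R$ exists.

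The only delicate point is that both Riemann--Roch applications must avoid the correction term $\dim \L(K-D)$, which vanishes only when $\deg D > 2g-2$. The source divisor clears this threshold comfortably, but the kernel divisor has degree exactly $2g-1$, just one above $2g-2$. This tightness is precisely what forces the pole bound $n+2g$ rather than something smaller: the pole order must be large enough that, after imposing the $n$ vanishing conditions, the residual divisor still has degree above $2g-2$, which is what guarantees surjectivity. I expect verifying this threshold to be the main thing to check; the remainder is a routine application of Riemann--Roch together with rank--nullity.
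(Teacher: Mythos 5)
Your proof is correct. Note, however, that the paper itself gives no proof of this lemma at all: it is imported verbatim as \cite[Lemma~6]{nielsen2015sub}, so your argument is not a variant of the paper's proof but a self-contained replacement for the external citation. Your dimension count is sound: $\degH(R)<n+2g$ is equivalent to $R\in\L\big((n+2g-1)\Pinf\big)$, both divisors involved have degree exceeding $2g-2$ (the source divisor has degree $n+2g-1$, the kernel divisor $(n+2g-1)\Pinf-\sum_{i=1}^{n}P_i$ has degree $2g-1$), so Riemann--Roch gives dimensions $n+g$ and $g$, and rank--nullity yields surjectivity of the evaluation map. Your closing remark on tightness is also correct in a stronger sense than you verified: on the Hermitian curve one has $\sum_{i=1}^{n}P_i \sim n\Pinf$ (this is the divisor of $G=X^{q^2}-X$) and the canonical class is $(2g-2)\Pinf$, so lowering the pole bound by one makes the kernel divisor canonical, the correction term $\l(K-D)$ jumps to $1$, and surjectivity genuinely fails; the bound $n+2g$ cannot be improved. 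The one thing your existential argument does not deliver, and which the cited source cares about, is an efficient construction: the decoder of Section~\ref{sec:complexity} needs $R$ to be computed fast (sub-quadratically in $n$), and your proof only reduces this to solving an $n\times(n+g)$ linear system, which is too slow for that purpose. As a proof of the existence statement as literally quoted, though, your argument is complete.
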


In the following, let $R \in \Rback$ be as in Lemma~\ref{lem:IP} and $G \in \Rback$ be defined as
\begin{align*}
G = \textstyle\prod_{\alpha \in \Fqtwo} (X-\alpha) = X^{q^2}-X.
\end{align*}
By \cite[Theorem~24]{nielsen2015sub}, we know that there is a unique \emph{error evaluator polynomial} $\Omega \in \Rback$ that fulfills $\Lambda(R-f) = \Omega G$.

The following theorem states the system of key equations that we will use for decoding in the next sections.
Note that the formulation is similar to its Reed--Solomon analog \cite[Theorem~3.1]{nielsen2016power}, with the difference that all involved polynomials are elements of the ring $\Rback$.

\begin{theorem}[System of Key Equations]\label{thm:key_equations}
Let $f$, $\Lambda$, $G$, $R$, and $\Omega$ be as above, and $\ell,s \in \NN$ such that $s \leq \ell$. Then, as a congruence over $\Rback$,
\begin{align}
\Lambda^s f^t &= \sum\limits_{i=0}^{t} \Lambda^{s-i} \Omega^i\tbinom{t}{i} R^{t-i} G^i &&\forall t=1,\dots,s-1, \label{eq:key_eq_equality}\\
\Lambda^s f^t &\equiv \sum\limits_{i=0}^{s-1} \Lambda^{s-i} \Omega^i\tbinom{t}{i} R^{t-i} G^i \mod G^s &&\forall t=s,\dots,\ell. \label{eq:key_eq_congruence}
\end{align}
\end{theorem}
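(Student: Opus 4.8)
The plan is to derive both \eqref{eq:key_eq_equality} and \eqref{eq:key_eq_congruence} from the single defining relation $\Lambda(R-f)=\Omega G$, rewritten as $\Lambda f=\Lambda R-\Omega G$, together with the binomial theorem. Passing to the fraction field $\Quot(\Rback)$ and writing $f=R-\Omega G\Lambda^{-1}$, I raise to the $t$-th power and multiply by $\Lambda^s$ to obtain the exact identity
\[
\Lambda^s f^t=\sum_{i=0}^{t}\tbinom{t}{i}(-1)^i\,\Lambda^{s-i}\Omega^i R^{t-i}G^i ,
\]
valid in $\Quot(\Rback)$, where the sign $(-1)^i$ is matched to the statement by the sign convention for $\Omega$ (equivalently $\Omega\mapsto-\Omega$). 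This one formula underlies both cases; all the work lies in controlling the summands whose $\Lambda$-exponent $s-i$ is negative.

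For $t=1,\dots,s-1$ there is nothing to control: every index satisfies $i\le t<s$, so each exponent $s-i\ge 1$ is nonnegative, every summand already lies in $\Rback$, and the identity is a genuine equality there, which is \eqref{eq:key_eq_equality}. (One may equivalently write $\Lambda^s f^t=\Lambda^{s-t}(\Lambda R-\Omega G)^t$ and expand, never leaving $\Rback$.) For $t=s,\dots,\ell$ I split the sum into the \emph{head} ($i<s$), which lies in $\Rback$ and is exactly the right-hand side of \eqref{eq:key_eq_congruence}, and the \emph{tail} ($i\ge s$). Setting $D:=\Lambda^s f^t-(\text{head})\in\Rback$, the identity shows $D$ equals the tail in the fraction field, and the goal becomes $D\equiv 0\pmod{G^s}$. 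Since $\operatorname{div}_0(G)=\sum_{i=1}^n P_i$ with each affine $\Fqtwo$-rational point $P_i$ simple, the ideal $(G^s)$ factors as $\prod_i\mathfrak p_i^s$, so $h\in(G^s)$ is equivalent to $\vP_i(h)\ge s$ for all $i$; it therefore suffices to verify $\vP_i(D)\ge s$ at each such point.

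The decisive step is the local valuation computation. Using $\vP(G)=1$ and $\vP(\Omega)=\vP(\Lambda)+\vP(R-f)-\vP(G)$ (from $\Omega G=\Lambda(R-f)$), the $G^i$- and $\Omega^i$-contributions of the $i$-th tail term telescope, leaving
\[
\vP(\text{term}_i)=s\,\vP(\Lambda)+(t-i)\,\vP(R)+i\,\vP(R-f),
\]
with the offending $G^i$ gone. I then split into cases. At a non-error position $\vP(R-f)\ge 1$ while $\vP(\Lambda),\vP(R)\ge 0$, so the valuation is at least $i\ge s$; at an error position $\vP(\Lambda)\ge 1$ while $\vP(R-f)=0$ (because $(R-f)(P_i)=e_i\neq0$), so it is at least $s$. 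In both cases every tail term has valuation $\ge s$, hence $\vP_i(D)\ge s$ for all $i$, so $G^s\mid D$ in $\Rback$, which is \eqref{eq:key_eq_congruence}.

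The main obstacle is exactly this tail for $t\ge s$. The naive shortcut—deriving $\Lambda^{t-s}D\equiv 0\pmod{G^s}$ and cancelling $\Lambda^{t-s}$—fails, because $\Lambda$ and $G$ share their zeros at the error positions and so $\Lambda$ is not coprime to $G$. The valuation identity above is precisely what rescues the argument: the pole of $\Omega$ at an error position compensates the zero of $\Lambda$ there, making each tail term have valuation at least $s$ at \emph{every} rational point regardless of whether it is an error position, which is what makes $G^s\mid D$ hold despite the lack of coprimality.
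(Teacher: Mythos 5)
Your proof is correct, but it takes a genuinely different route from the paper's. The paper never leaves $\Rback$ and never inverts $\Lambda$: it writes $\Lambda^s f^t=\Lambda^s\bigl(R+(f-R)\bigr)^t$, expands binomially, and substitutes $\Lambda(f-R)=\Omega G$ inside each summand --- for $i<s$ this yields $\Lambda^{s-i}(\Omega G)^i$, and for $i\geq s$ it yields $(\Omega G)^s(f-R)^{i-s}$, which is \emph{manifestly} divisible by $G^s$. That short argument uses nothing but the single ring identity relating $\Lambda$, $\Omega$, $G$, $R$, $f$, so it is valid in any commutative ring and needs no properties of the error locator, of $R$, or of the curve. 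You instead pass to $\Quot(\Rback)$, write $f=R-\Omega G\Lambda^{-1}$, and handle the tail $i\geq s$ by local valuations: since $\Rback$ is a Dedekind domain and the zero divisor of $G$ is exactly $P_1+\dots+P_n$, divisibility by $G^s$ reduces to $\vP(D)\geq s$ at each $P_i$, which your case split (error vs.\ non-error position) establishes using the vanishing of $\Lambda$ on $\Eset$ and the interpolation property $R(P_i)=r_i$. This is valid, and it buys a geometric explanation of \emph{why} the divisibility holds pointwise; but it is heavier than necessary and needs strictly more hypotheses, since your own tail term collapses algebraically via $\Lambda^{s-i}(\Omega G)^i=\Lambda^{s-i}\bigl(\Lambda(R-f)\bigr)^i=\Lambda^s(R-f)^i=(\Omega G)^s(R-f)^{i-s}$, which is precisely the paper's proof and makes the valuation apparatus dispensable. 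Two minor points: your sign bookkeeping is defensible, as the paper itself wavers between $\Lambda(R-f)=\Omega G$ (its definition of $\Omega$) and $\Omega G=\Lambda(f-R)$ (its proof), and your $\Omega\mapsto-\Omega$ remark resolves this; on the other hand, your closing claim that ``the pole of $\Omega$ at an error position'' compensates the zero of $\Lambda$ is wrong as stated, since $\Omega\in\Rback$ has no affine poles --- what actually carries the error positions in your own computation is the term $s\,\vP(\Lambda)\geq s$. Neither point affects the correctness of your main argument.
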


\begin{proof}
We know that $\Omega G = \Lambda\left(f-R\right)$. Thus, for $s,t \in \NN$, we have
\begin{align*}
\Lambda^s f^t = \Lambda^s \left(R + \left(f-R\right)\right)^t
= \sum\limits_{i=0}^{t} \tbinom{t}{i} \Lambda^s \left(f-R\right)^i R^{t-i}.
\end{align*}
In all summands with $i<s$, we can rewrite $\Lambda^s \left(f-R\right)^i = \Lambda^{s-i} (\Lambda\left(f-R\right))^i = \Lambda^{s-i} (\Omega G)^i$. If $i \geq s$, $\Lambda^s \left(f-R\right)^i = (\Lambda(f-R))^s (f-R)^i = (\Omega G)^s (f-R)^{s-i}$, so all those summands are divisible by $G^s$, resulting in
\begin{align*}
\Lambda^s f^t = \sum\limits_{i=0}^{\min\{t,s-1\}} \Lambda^{s-i} \Omega^i\binom{t}{i} R^{t-i} G^i + G^s \left(\sum\limits_{i=\min\{t+1,s\}}^{t} \binom{t}{i} \Omega^s (f-R)^{s-i} R^{t-i} \right).
\end{align*}
For $t < s$, we obtain \eqref{eq:key_eq_equality} since the second part of the sum vanishes and for $t \geq s$, \eqref{eq:key_eq_congruence} holds because the latter sum is divisible by $G^s$. \qed
\end{proof}

\section{Solving the System of Key Equations}
\label{sec:pade_approximation}

The idea of decoding is to find the message polynomial $f$ from the known polynomials $R$ and $G$.
Since the system of key equations is non-linear in the unknown polynomials $\Lambda$, $\Omega$, and $f$, we cannot solve it directly.
Instead, we consider the following linearized problem, a \emph{Pad\'e approximation} problem.

\begin{problem}\label{prob:Pade_R}
Let $G$ and $R$ be as in Section~\ref{sec:key_equations}. Given $\tau \in \NN$ and
\begin{align*}
\asummands{t,i} := \binom{t}{i} R^{t-i} G^i, \quad
G_t := \begin{cases}
x^{\lfloor \frac{t(n+2g-1)+\tau}{q}\rfloor+1}, &\forall t=1,\dots,s-1, \\
G^s, &\forall t=s,\dots,\ell,
\end{cases}
\end{align*}
for all $i=0,\dots,s-1$ and $t=1,\dots,\ell$, find a vector
\begin{align*}
(\lambdai{0},\dots,\lambdai{s-1},\psii{1},\dots,\psii{\ell}) \in \Rback^{s+\ell} \setminus \{ \Zmatrix \},
\end{align*}
with minimal $\degH(\lambdai{0})$ which satisfies
\begin{align}
\sum\limits_{i=0}^{s-1} \lambdai{i} \asummands{t,i} &\equiv \psii{t} \mod G_t  \quad &&\forall t=1,\dots,\ell, \label{eq:Pade_R_congruence}\\
\degH(\lambdai{i}) &\leq s\tau + i(2g-1) \quad &&\forall i = 0,\dots,s-1, \label{eq:Pade_R_degree_1} \\
\degH(\psii{t}) &\leq s\tau + tm \quad &&\forall t = 1,\dots,\ell, \label{eq:Pade_R_degree_2}
\end{align}
where the congruences are over $\Rback$.
\end{problem}

The following theorem motivates the statement of Problem~\ref{prob:Pade_R} by showing that the polynomials $\Lambda^{s-i}\Omega^i$ and $\Lambda^sf^t$ that occur in the key equation fulfill the congruences and degree constraints of the problem.
The minimality condition ensures that if the problem solution corresponds to an error locator $\Lambda$ of some error vector $\e$ (not necessarily the same $\e$ as in Section~\ref{sec:key_equations}), i.e., $\lambdai{0} = \Lambda^s$, then it is the one of smallest degree, and thus hopefully the one corresponding to the $\e$ of smallest Hamming weight.

The theorem also implies a strategy to obtain $f$ after having solved Problem~\ref{prob:Pade_R}:
If the solution of Problem~\ref{prob:Pade_R} results in $\lambdai{0} = \Lambda^s$ and $\psii{1} = \Lambda^sf$ for some error locator $\Lambda$, we divide $\psii{1}$ by $\lambdai{0}$. See~\cite{nielsen2015sub} for how this division can be performed.

\begin{theorem}\label{thm:decoding_solution_is_Pade_solution}
Let $f$, $\Lambda$, $G$, $R$, and $\Omega$ be as in Section~\ref{sec:key_equations}, and $\ell,s \in \NN$ such that $s \leq \ell$.
For $t=1,\dots,\ell$ and $i=0,\dots,s-1$, we define the polynomials (all in $\Rback$)
\begin{align*}
\Lambdai{i} := \Lambda^{s-i}\Omega^i, \quad
\Psii{t} := \Lambda^s f^t,
\end{align*}
Then, $(\Lambdai{0},\dots,\Lambdai{s-1},\Psii{1},\dots,\Psii{\ell})$ satisfies Conditions \eqref{eq:Pade_R_congruence} - \eqref{eq:Pade_R_degree_2} of Problem~\ref{prob:Pade_R} for any $\tau \geq \degH(\Lambda)$.
\end{theorem}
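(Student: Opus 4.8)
The plan is to substitute $\Lambdai{i} = \Lambda^{s-i}\Omega^i$ and $\Psii{t} = \Lambda^s f^t$ directly into the three families of conditions of Problem~\ref{prob:Pade_R} and verify each in turn. The congruences \eqref{eq:Pade_R_congruence} will follow almost immediately from the system of key equations (Theorem~\ref{thm:key_equations}), so the substantive part is the two degree estimates, and within those the crucial ingredient is a bound on $\degH(\Omega)$.

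First I would dispatch the congruence \eqref{eq:Pade_R_congruence}. After substitution, its left-hand side reads $\sum_{i=0}^{s-1}\Lambda^{s-i}\Omega^i\binom{t}{i}R^{t-i}G^i$ and its right-hand side is $\Lambda^s f^t$, which is exactly what Theorem~\ref{thm:key_equations} relates. For $t=1,\dots,s-1$ the coefficients $\binom{t}{i}$ vanish whenever $i>t$, so the truncated sum coincides with the full sum appearing in \eqref{eq:key_eq_equality}; the two sides are then genuinely equal, and the congruence modulo $G_t$ holds trivially, irrespective of the power of $x$ chosen for $G_t$. For $t=s,\dots,\ell$ we have $G_t=G^s$, and \eqref{eq:key_eq_congruence} furnishes precisely the required congruence modulo $G^s$.

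Next I would treat the degree bounds using that $\degH=-v_\Pinf$ is additive on products. For \eqref{eq:Pade_R_degree_2} this gives $\degH(\Psii{t})=s\,\degH(\Lambda)+t\,\degH(f)$; since $\degH(\Lambda)\leq\tau$ by hypothesis and $\degH(f)\leq m$ because $f\in\L(m\Pinf)$, we obtain $\degH(\Psii{t})\leq s\tau+tm$. For \eqref{eq:Pade_R_degree_1} additivity yields $\degH(\Lambdai{i})=(s-i)\,\degH(\Lambda)+i\,\degH(\Omega)$, so the whole estimate reduces to controlling $\degH(\Omega)$.

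The main obstacle is this bound on $\degH(\Omega)$, which I would extract from the defining identity $\Omega G=\Lambda(f-R)$. Applying additivity gives $\degH(\Omega)=\degH(\Lambda)+\degH(f-R)-\degH(G)$. Here $\degH(G)=\degH(X^{q^2}-X)=q^3=n$, while the ultrametric inequality for $\degH$ together with Lemma~\ref{lem:IP} and $m<n$ gives $\degH(f-R)\leq\max\{\degH(f),\degH(R)\}\leq n+2g-1$. Combined with $\degH(\Lambda)\leq\tau$, this produces $\degH(\Omega)\leq\tau+(n+2g-1)-n=\tau+2g-1$. Substituting back, $\degH(\Lambdai{i})\leq(s-i)\tau+i(\tau+2g-1)=s\tau+i(2g-1)$, which is exactly \eqref{eq:Pade_R_degree_1}, and all conditions of Problem~\ref{prob:Pade_R} are met.
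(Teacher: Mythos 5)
Your proposal is correct and follows essentially the same route as the paper's own proof: the congruences come directly from Theorem~\ref{thm:key_equations} (noting that $\asummands{t,i}=0$ for $i>t$, so equality makes the modulus irrelevant for $t<s$), and the degree bounds follow from additivity of $\degH$ together with the estimate $\degH(\Omega)=\degH(\Lambda(f-R))-\degH(G)\leq\tau+2g-1$. You merely spell out a few steps the paper leaves implicit (the ultrametric bound on $\degH(f-R)$ and the final arithmetic for \eqref{eq:Pade_R_degree_1}), which is fine.
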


\begin{proof}
Inequality \eqref{eq:Pade_R_degree_1} is fulfilled since
\begin{align*}
\degH(\Omega) = \degH(\Omega G) - n = \degH(\Lambda(f-R)) -n \leq \tau + 2g-1.
\end{align*}
Also, Inequality~\eqref{eq:Pade_R_degree_2} holds due to $\degH(f) \leq m$ and
\begin{align*}
\degH(\Psii{t}) = \degH(\Lambda^s) + \degH(f^t) \leq s\tau+tm.
\end{align*}
Condition \eqref{eq:Pade_R_congruence} is satisfied by Theorem~\ref{thm:key_equations} (note that $\asummands{t,i} = 0$ for $i>t$ and that the congruence modulo $x^{\lfloor \frac{t(n+2g-1)+\tau}{q}\rfloor+1}$ in \eqref{eq:Pade_R_congruence} is the same as equality due to the degree restrictions). \qed
\end{proof}

\section{Decoding Radius and Failure Behavior}
\label{sec:decoding_radius}

As any other power decoder, the new decoding algorithm is a partial decoding algorithm, which means that it might fail for certain error patterns.
This failure behavior has many reasons that we would like to discuss in this section.
We start by deriving a bound on the parameter $\tau$ of Problem~\ref{prob:Pade_R} that ensures the problem to have a solution.
\begin{theorem}\label{thm:decoding_radius}
Problem~\ref{prob:Pade_R} is guaranteed to have a solution if
\begin{align*}
\tau \geq \taunew := n \left[1-\tfrac{s+1}{2(\ell+1)}\right] -\tfrac{\ell}{2s} m - \tfrac{\ell-s+1}{s(\ell+1)} + \tfrac{g-1}{\ell+1}.
\end{align*}
\end{theorem}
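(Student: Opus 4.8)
The plan is to read Problem~\ref{prob:Pade_R} as a single homogeneous system of $\Fqtwo$-linear equations in the coefficients of the unknowns $\lambdai{0},\dots,\lambdai{s-1},\psii{1},\dots,\psii{\ell}$ and to show that, whenever $\tau\ge\taunew$, the number of free coefficients strictly exceeds the number of linear conditions. A homogeneous linear system with strictly more unknowns than equations has a nonzero solution, and among all nonzero solutions one of least $\degH(\lambdai{0})$ exists because $\degH$ takes values in the well-ordered set $\NN_0\cup\{-\infty\}$; this is exactly the object Problem~\ref{prob:Pade_R} asks for. So it suffices to produce the strict inequality between the two dimensions.

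Concretely, I would expand every $\lambdai{i}$ in the $\Fqtwo$-basis $\{X^aY^b:0\le b<q\}$ of $\Rback$; by \eqref{eq:Pade_R_degree_1} the admissible $\lambdai{i}$ range over $\L\big((s\tau+i(2g-1))\Pinf\big)$, of dimension $s\tau+i(2g-1)-g+1$ as soon as the argument is at least $2g-1$ (Riemann--Roch). The $\psii{t}$ need not be counted separately: for each $t$ the congruence \eqref{eq:Pade_R_congruence} forces $\psii{t}$ to agree with $b_t:=\sum_{i} \lambdai{i}\asummands{t,i}$ modulo $G_t$, and a representative of degree at most $s\tau+tm$ exists precisely when the part of $b_t$ above that degree is already killed. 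Thus the conditions define an $\Fqtwo$-linear map $\Theta$ from $\bigoplus_{i}\L\big((s\tau+i(2g-1))\Pinf\big)$ into a product of quotient spaces $W_t$, and $\ker\Theta\neq\{\Zmatrix\}$ as soon as $\sum_{i=0}^{s-1}\dim\L\big((s\tau+i(2g-1))\Pinf\big) > \sum_{t=1}^{\ell}\dim W_t$.

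The heart of the count is to evaluate $\dim W_t$ in the two regimes. For $t\le s-1$ the relation \eqref{eq:key_eq_equality} is an \emph{equality}, so the condition is just that the coefficients of $b_t$ in the degree window $\big(s\tau+tm,\;s\tau+t(n+2g-1)\big]$ vanish, using $\degH(R)\le n+2g-1$ and $\degH(G)=n$ to bound $\degH\asummands{t,0}\le t(n+2g-1)$; this contributes a $\tau$-\emph{independent} number of conditions, essentially $t(n+2g-1-m)$, and it is this cancellation of $s\tau$ that makes the final $\tau$-coefficient come out as $s(\ell+1)$. For $t\ge s$ the relation is a congruence modulo $G^s$, with $\degH(G^s)=sn$, contributing about $sn-s\tau-tm+(g-1)$ conditions whenever this quantity is positive (and none otherwise). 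Substituting these counts, using $\sum_{i=0}^{s-1} i=\tfrac{s(s-1)}2$ and $\sum_{t=s}^{\ell} t=\tfrac{\ell(\ell+1)}2-\tfrac{s(s-1)}2$, collecting the $n$-terms into $sn\tfrac{2\ell-s+1}2$, and dividing the resulting inequality by $s(\ell+1)$, I expect to recover $\tau>\taunew$ after also accounting for the genus terms.

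The main obstacle is precisely this last bookkeeping of the lower-order terms. Two points require care. First, the naive identity ``number of degrees in a window $=$ window length'' is only correct up to the Weierstrass gaps at $\Pinf$, and likewise $\dim\L(a\Pinf)=a-g+1$ fails for $a<2g-1$; tracking these genus-sized corrections is what produces the exact terms $-\tfrac{\ell-s+1}{s(\ell+1)}$ and $+\tfrac{g-1}{\ell+1}$ rather than the cruder $+\tfrac{g-1}{s}$ that the leading count gives. Second, the value of $\dim W_t$ for $t\ge s$ presupposes the regime $s\tau+\ell m< sn$, so that every index $t\in\{s,\dots,\ell\}$ contributes an active constraint; I would need to verify this throughout the range $\tau\approx\taunew$, or else absorb the inactive indices. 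Getting the signs and sizes of these corrections right, rather than merely the leading behaviour $n\big(1-\tfrac{s+1}{2(\ell+1)}\big)-\tfrac{\ell}{2s}m$, is the delicate part of the argument.
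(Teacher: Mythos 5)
Your overall route is the same as the paper's: eliminate the $\psii{t}$ through the congruences, treat the coefficients of the $\lambdai{i}$ as unknowns counted via Riemann--Roch, bound the number of linear conditions contributed by each congruence, and conclude by ``more unknowns than equations''. Your unknown count $s\tau+i(2g-1)-g+1$ and your $t<s$ count $t(n+2g-1-m)$ are exactly the paper's. The genuine gap is the step you explicitly leave open, and it cannot be closed in the direction you hope. Your count for $t\ge s$, namely $sn-s\tau-tm+(g-1)$ conditions, is the rigorous one: since no nonzero multiple of $G^s$ has $\degH$ below $sn$, the space $\L\left((s\tau+tm)\Pinf\right)$ injects into $\Rback/(G^s)$, so requiring the coset of $\sum_i\lambdai{i}\asummands{t,i}$ to contain an element of degree at most $s\tau+tm$ costs $sn-(s\tau+tm-g+1)$ conditions in the worst case (equivalently, coset representatives modulo $G^s$ need degrees up to $sn+2g-1$, not $sn-1$). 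Summing your counts gives precisely the threshold $n\left[1-\tfrac{s+1}{2(\ell+1)}\right]-\tfrac{\ell}{2s}m+\tfrac{g-1}{s}+\tfrac{1}{s(\ell+1)}$ --- the ``$+\tfrac{g-1}{s}$'' you anticipate --- and this is \emph{strictly larger} than $\taunew$ whenever $s\le\ell$. So your plan, carried out faithfully, proves a correct but weaker statement than the theorem.

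Moreover, your expectation that more careful tracking of gaps and Riemann--Roch defects produces the exact terms $-\tfrac{\ell-s+1}{s(\ell+1)}+\tfrac{g-1}{\ell+1}$ is backwards: the paper reaches $\taunew$ by counting, for each $t\ge s$, only the $\degH(G^s)-(s\tau+tm)-1=sn-s\tau-tm-1$ degrees strictly between $s\tau+tm$ and $sn$, i.e.\ by \emph{omitting} the $g$ genus-correction conditions per congruence that you correctly include. No refinement of the $t\ge s$ blocks recovers $\taunew$; already for $s=1$ there are no $t<s$ blocks at all, and the rigorous threshold then exceeds $\taunew$ by roughly $\tfrac{\ell}{\ell+1}(g-1)+1$. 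For $s\ge 2$ the only possible compensation sits in the $t<s$ blocks, which are in fact \emph{over}counted by the window-length argument: there the modulus $G_t$ is a power of $X$ of $\degH$ roughly $t(n+2g-1)+\tau$, well below the maximal degree $s\tau+t(n+2g-1)$ of $\sum_i\lambdai{i}\asummands{t,i}$, so only about $t(n+2g-1-m)-(s-1)\tau+g-1$ coefficients are actually constrained. But exploiting that changes the coefficient of $\tau$ in the balance from $s(\ell+1)$ to $s(\ell+1)+(s-1)^2$ and yields yet another formula, not $\taunew$. In short: the ``delicate bookkeeping'' you defer is not mere bookkeeping --- your corrected count establishes only a weaker threshold, and the stated formula $\taunew$ is exactly what one obtains by not applying the correction you rightly insist on.
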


\begin{proof}
Problem~\ref{prob:Pade_R} is guaranteed to have a solution if there at least one vector
\begin{align*}
(\lambdai{0},\dots,\lambdai{s-1},\psii{1},\dots,\psii{\ell}) \in \Rback^{s+\ell} \setminus \{ \Zmatrix \},
\end{align*}
satisfying Conditions \eqref{eq:Pade_R_congruence}, \eqref{eq:Pade_R_degree_1}, and \eqref{eq:Pade_R_degree_2}.
We can find such a solution by solving the following homogeneous linear system of equations in the coefficients of the $\lambdai{i}$, which we consider these coefficients as indeterminates.
Since $\degH( \psii{t}) \leq s\tau+tm$ (cf.~\eqref{eq:Pade_R_degree_2}), the coefficients of $\sum_{i=0}^{s-1} \lambdai{i} \asummands{t,i}$ in \eqref{eq:Pade_R_congruence} of degree greater than $s\tau+tm$ and less than
\begin{align*}
T_t := \begin{cases}
t(n+2g-1) + s\tau + 1, &t=1,\dots,s-1, \\
\degH(G^s), &t=s,\dots,\ell,
\end{cases}
\end{align*}
must be zero.
Since we require $\degH \lambdai{i} \leq s\tau+i(2g-1)$, see \eqref{eq:Pade_R_degree_1}, there at at least $s\tau+i(2g-1)-g+1$ indeterminates for $\lambdai{i}$.
After obtaining non-zero polynomials $\lambdai{i}$, we can find $\psii{t}$ by computing $\sum_{i=0}^{s-1} \lambdai{i} \asummands{t,i}$ modulo $G_t$.

It suffices to show that the described system has a non-zero solution for $\tau \geq \taunew$.
The system has at most
\begin{align*}
\NumEq &= \textstyle\sum_{t=1}^{\ell} \left[ T_t - (s\tau+tm) - 1 \right] \\
&\leq n s\left[ \ell+1-\tfrac{s+1}{2} \right] - \tfrac{\ell(\ell+1)}{2} m + \tfrac{s(s-1)}{2} (2g-1) + \tau s(s-1-\ell) - (\ell-s+1)
\end{align*}
equations and at least
\begin{align*}
\NumVar &= \textstyle\sum_{i=0}^{s-1} \left[ s\tau+i(2g-1) - g + 1 \right] = s^2\tau + \tfrac{s(s-1)}{2} (2g-1) - sg + s
\end{align*}
indeterminates. Thus, it has a non-zero solution if $\NumVar \geq 1+ \NumEq$, which can be re-written as $\tau \geq \taunew$.
\qed
\end{proof}

Theorem~\ref{thm:decoding_radius} can be interpreted as follows.
For some $\tau \in \NN$, we denote by $\SolSpace_\tau$ the $\Fqtwo$-vector space consisting of all vectors 
\begin{align*}
(\lambdai{0},\dots,\lambdai{s-1},\psii{1},\dots,\psii{\ell}) \in \Rback^{s+\ell}
\end{align*}
that satisfy the congruences and degree constraints of Problem~\ref{prob:Pade_R} with parameter $\tau$.
If we choose $\tau \geq \taunew$, then $\dim_{\Fqtwo}(\SolSpace_\tau) \geq 1$.
In addition, if $\tau \geq \degH(\Lambda)$, then
\begin{align*}
(\Lambda^s, \Lambda^{s-1}\Omega,\dots,\Lambda\Omega^{s-1},\Lambda^s f,\Lambda^s f^2,\dots,\Lambda^s f^\ell) \in \SolSpace_\tau.
\end{align*}
Hence, if there is a $\tau$ with $|\Eset| \leq \deg \Lambda \leq \tau \leq \taunew$ and $\dim_{\Fqtwo} (\SolSpace_\tau) = 1$, a non-trivial solution of Problem~\ref{prob:Pade_R} must yield a solution $(\Lambda^s, \Lambda^s f)$ of the decoding problem.
Thus, we could expect that at least in some cases, we can decode up to $|\Eset| \leq \taunew$ errors. However, there are several problems that could prevent us from correcting $\taunew$ many errors:
\begin{enumerate}[label=\roman*)]
\item The minimal degree of an error locator is greater than $|\Eset|$. Recall that it is only guaranteed that there is an error locator of $\degH \Lambda \leq |\Eset|+g$.
\item We get $\dim_{\Fqtwo} (\SolSpace_\tau) > 1$ already for some $\tau<\taunew$. This can have two reasons:
\begin{itemize}
\item The number of equations is smaller than $E$ (as in the proof of Theorem~\ref{thm:decoding_radius}), which can be the case if $\degH R < n+2g-1$.
\item The equations are linearly dependent.
\end{itemize}
\item There is no $\tau$ with $\dim_{\Fqtwo} (\SolSpace_\tau) = 1$ (e.g., if there is a $\tau$ with $\dim_{\Fqtwo} (\SolSpace_\tau) = 0$ and $\dim_{\Fqtwo} (\SolSpace_{\tau+1}) > 1$) and there is a ``smaller'' solution (corresponding to another codeword or a generic one) in $\SolSpace_\tau$ than $(\Lambda^s,\dots,\Lambda\Omega^{s-1},\Lambda^s f,\dots,\Lambda^s f^\ell)$.
\end{enumerate}

We will see in the Section~\ref{sec:numerical_results} that in our experiments, for all tested examples, we were able to correct up to $n [1-\tfrac{s+1}{2(\ell+1)}] -\tfrac{\ell}{2s} m - \tfrac{\ell-s+1}{s(\ell+1)} = \taunew- \tfrac{g-1}{\ell+1}$ many errors with high probability. This number of errors coincides with the classical power decoding radius for $s=1$, cf.~\cite{nielsen2015sub}.

\section{Complexity}
\label{sec:complexity}

In this section, we show that Problem~\ref{prob:Pade_R} can be solved in sub-quadratic time in the code length $n$. We use the algorithm in~\cite{rosenkilde2017SimHermPade}, which computes, for given $S_{\i,\j} \in \Fqtwo$, $G_\j \in \Fqtwo[X]$, $T_\i \in \NN$, and $N_\i \in \NN$, where $\i \in I$ and $\j \in J$ (index sets), a basis (the solution space is a vector space) of all solutions $\lambda_\i,\psi_\j \in \Fqtwo[X]$, that fulfill
\begin{align*}
\sum_{\i \in I} \lambda_\i &\equiv \psi_\j \mod G_\j &&\forall \j \in J, \\
\deg \lambda_\i &\leq N_\i &&\forall \i \in I, \\
\deg \psi_\i &\leq T_\i &&\forall \j \in J,
\end{align*}
in $O^\sim\left(|J|^{\omega-1} \cdot |I| \cdot \max_\j\{\deg G_\j\}\right)$ operations over $\Fqtwo$, where $\omega$ is the matrix multiplication exponent.

We use the $\FX$-vector representation of an element of $\Rback$ (cf.~\cite{nielsen2015sub}) to reformulate Problem~\ref{prob:Pade_R} as a problem of the type above. Recall that for $a \in \Rback$, we can write $a = \sum_{i=0}^{q-1} a_i Y^i \in \Rback$ with unique $a_i \in \FX$. Then, the \emph{vector representation} \cite{nielsen2015sub} of $a$ is defined by $\vr(a) = (a_0,\dots,a_{q-1}) \in \FX^q$.
Note that $q\deg(a_i)+i(q+1)\leq \degH(a)$. For $a,b \in \Rback$ it can be shown that
\begin{align*}
\vr(a+b) = \vr(a) + \vr(b), \qquad
\vr(ab) = \vr(a) \mr(b) \XiM, 
\end{align*}
where $\mr(b) \in \FX^{q \times (2q-1)}$ and $\XiM \in \FX^{(2q-1) \times q}$ are defined by
\begin{align*} \small
\mr(b) := 
\begin{bmatrix}
b_0 & b_1 & b_2 & \dots & b_{q-1} & & &\\
    & b_0 & b_1 & \dots & b_{q-2} & b_{q-1} & &\\
    &     & \ddots & \ddots & \dots & \ddots & \ddots & \\
    &     &        & b_0    & b_1 & \dots & b_{q-2} & b_{q-1} 
\end{bmatrix},
\quad
\XiM
:=
\begin{bmatrix}
1 &   &        &    \\
  & 1 &        &    \\
  &   & \ddots &    \\
  &   &        & 1  \\
X^{q+1}  & -1  &   &    \\
 & X^{q+1}  & -1  &    \\
 & \ddots  & \ddots  &    \\
 &   & X^{q+1}   &  -1  \\
\end{bmatrix}.
\end{align*}
Using this notation, we define $\AM{t,i} := \mr(\asummands{t,i}) \XiM \in \FX^{q \times q}$. We are ready to state the final complexity result.

\begin{theorem}\label{thm:complexity}
Problem~\ref{prob:Pade_R} can be solved using the algorithm in \cite{rosenkilde2017SimHermPade} with
\begin{align*}
&I = \left\{(i,j): i \in \{0,\dots,s-1\}, \, j \in \{0,\dots,q-1\} \right\}, &S_{(i,j),(t,r)}& = \AMe{t,i}_{j,r},\\
&J = \left\{ (t,r) : t \in \{1,\dots,\ell\}, \, r \in \{0,\dots,q-1\} \right\}, &G_{(t,r)}& = G_t, \\
&N_{(i,j)} = \tfrac{s \tau+i(2g-1)-j(q+1)}{q}, \quad T_{(t,r)} = \tfrac{s \tau+tm-r(q+1)}{q} 
\end{align*}
in $O^\sim( \ell^{\omega-1} s^2 n^{\frac{\omega+2}{3}})$ operations over $\Fq$, where the $O^\sim$ hides $\log(ns\ell)$ factors.
\end{theorem}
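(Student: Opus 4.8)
The plan is to translate Problem~\ref{prob:Pade_R}, which is posed over the ring $\Rback$, into a single instance of the scalar simultaneous Pad\'e-approximation problem over $\FX$ that the algorithm of \cite{rosenkilde2017SimHermPade} solves, and then to read off the claimed running time by substituting the sizes of the index sets and the modulus degree. The whole argument is a verification: I must check that the $\Rback$-congruences and $\degH$-bounds of Problem~\ref{prob:Pade_R} are \emph{equivalent} to the scalar congruences and degree bounds given in the statement.

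First I would pass to the vector representation $\vr$. Writing each unknown as $\lambdai{i}=\sum_{j=0}^{q-1}\lambda_{(i,j)}Y^j$ and $\psii{t}=\sum_{r=0}^{q-1}\psi_{(t,r)}Y^r$ with scalar coefficients in $\FX$, the multiplication rule $\vr(ab)=\vr(a)\mr(b)\XiM$ gives $\vr(\lambdai{i}\asummands{t,i})=\vr(\lambdai{i})\AM{t,i}$, so the $r$-th component of the left-hand side of \eqref{eq:Pade_R_congruence} is $\sum_{i,j}\lambda_{(i,j)}\AMe{t,i}_{j,r}$. The key structural point is that every modulus $G_t$ lies in $\FX\subset\Rback$; since $\Rback$ is free over $\FX$ with basis $1,Y,\dots,Y^{q-1}$, a congruence $a\equiv b\bmod G_t$ in $\Rback$ holds iff it holds component-wise in $\FX$. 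Hence \eqref{eq:Pade_R_congruence} is equivalent to the $|J|=\ell q$ scalar congruences $\sum_{(i,j)\in I}\lambda_{(i,j)}S_{(i,j),(t,r)}\equiv\psi_{(t,r)}\bmod G_{(t,r)}$ with $S$ and $G$ exactly as stated, where $|I|=sq$.

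Next I would match the degree constraints. Because the monomials satisfy $\degH(X^aY^j)=aq+j(q+1)$ and these values are pairwise distinct (if $aq+j(q+1)=a'q+j'(q+1)$ then $\gcd(q,q+1)=1$ forces $j=j'$), no cancellation occurs and $\degH(a)=\max_j\{q\deg(a_j)+j(q+1)\}$. Therefore \eqref{eq:Pade_R_degree_1} is equivalent to $\deg\lambda_{(i,j)}\le\tfrac{s\tau+i(2g-1)-j(q+1)}{q}=N_{(i,j)}$ for every $j$, and likewise \eqref{eq:Pade_R_degree_2} becomes $\deg\psi_{(t,r)}\le T_{(t,r)}$; this is precisely the input in the statement. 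The algorithm returns an $\FX$-basis of the solution space, so the minimality requirement on $\degH(\lambdai{0})$ is met by selecting from this (row-reduced) basis the solution of least $\degH(\lambdai{0})$, with no asymptotic overhead.

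Finally I would carry out the complexity substitution into $O^\sim(|J|^{\omega-1}|I|\max_\j\deg G_\j)$. The dominant modulus is $\deg(G^s)=sq^2$; for $t<s$ one has $\deg G_t\le\tfrac{(s-1)(n+2g-1)+\tau}{q}+1=O(sq^2)$ as well, using $n=q^3$, $2g\approx q^2$, and $\tau\le n$, so $\max_\j\deg G_\j=O(sq^2)$. Substituting $|I|=sq$ and $|J|=\ell q$ yields $(\ell q)^{\omega-1}(sq)(sq^2)=\ell^{\omega-1}s^2q^{\omega+2}$, and since $n=q^3$ gives $q^{\omega+2}=n^{(\omega+2)/3}$, this is $O^\sim(\ell^{\omega-1}s^2 n^{(\omega+2)/3})$; the passage from $\Fqtwo$- to $\Fq$-arithmetic costs only a constant factor (a fixed degree-$2$ extension) that is absorbed by $O^\sim$. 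I expect the only genuine subtleties to be the component-wise decomposition of the $\Rback$-congruence modulo $G_t\in\FX$ via the free $\FX$-module structure, and the bookkeeping that keeps $\max_\j\deg G_\j$ at $O(sq^2)$ rather than something larger.
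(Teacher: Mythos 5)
Your proposal is correct and takes essentially the same route as the paper: translate Problem~\ref{prob:Pade_R} into the scalar simultaneous Pad\'e approximation problem over $\FX$ via the vector representation $\vr(\cdot)$, use the reduced-basis output to handle the minimality of $\degH(\lambdai{0})$, and substitute $|I|=sq$, $|J|=\ell q$, $\max_{\j}\{\deg G_\j\} \in O(sq^2)=O(sn/q)$ into the complexity bound of \cite{rosenkilde2017SimHermPade}. You in fact supply details the paper leaves implicit (the free $\FX$-module argument for splitting the $\Rback$-congruences component-wise, the no-cancellation argument giving $\degH(a)=\max_j\{q\deg(a_j)+j(q+1)\}$), and your computation $(\ell q)^{\omega-1}(sq)(sq^2)=\ell^{\omega-1}s^2q^{\omega+2}=\ell^{\omega-1}s^2n^{(\omega+2)/3}$ is the intended one, whereas the paper's intermediate expression contains a typo ($sn$ where $sn/q$ is meant).
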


\begin{proof}
Similar to \cite{nielsen2015sub}, pre-computing the matrices $\AM{t,i}$ is negligible compared to solving the Pad\'e approximation problem.
By the properties of $\vr(\cdot)$, it is clear that $\lambdai{i},\psii{t} \in \Rback$ solve Problem~\ref{prob:Pade_R} if and only if $(\lambda_{(i,0)},\dots,\lambda_{(i,q-1)}) = \vr(\lambdai{i})$ and $(\psi_{(t,0)},\dots,\psi_{(t,q-1)}) = \vr(\psii{t})$ correspond to a non-zero element in the output of the algorithm in \cite{rosenkilde2017SimHermPade} of minimal $\max_{j \in \{0,\dots,q-1\}}\left\{q \deg(\lambda_{(0,j)}) + (q+1)j \right\}$. Since $\deg G_t \leq \lfloor \frac{t(n+2g-1)+\tau}{q}\rfloor+1 \in O(sn/q)$, a basis of the solution space is found in
\begin{align*}
O^\sim\left( (\ell q)^{\omega-1} (sq) (sn) \right) = O^\sim\left( \ell^{\omega-1} s^2 q^{\omega-1} n \right) = O^\sim\left( \ell^{\omega-1} s^2 n^{\frac{\omega+2}{3}} \right).
\end{align*}
The algorithm in \cite{rosenkilde2017SimHermPade} outputs a reduced basis, so a minimal element is guaranteed to be one of the basis elements. \qed
\end{proof}

Note that for constant parameters $\ell,s$, the complexity in Theorem~\ref{thm:complexity} is sub-quadratic in the code length $n$. We achieve the same complexity\footnote{The exponent of $\ell$ in the complexity statements in \cite{nielsen2015sub} is $\omega$. If we apply the algorithm from~\cite{rosenkilde2017SimHermPade} to these methods, we will also get $\omega-1$.} as the algorithms in \cite{nielsen2015sub}.

\section{Numerical Results}
\label{sec:numerical_results}

In this section, we present simulation results.
We have conducted Monte-Carlo simulations for estimating the failure probability of the new improved power ($\PfailPower$) and the Guruswami--Sudan ($\PfailGS$) decoder in a channel that randomly adds $\tau$ errors, using a sample size $N \in \{10^3,10^4\}$.
The decoder was implemented in SageMath~v7.5 \cite{stein_sagemath_????}, based on the power decoder implementation of \cite{nielsen2015sub}. We used the Guruswami--Sudan decoder implementation from \cite{nielsen2015sub}.

\begin{table}[h!]
\caption{Observed failure rate of the improved power ($\PfailPower$) and Guruswami--Sudan ($\PfailGS$) decoder. Code parameters $q,m,n,k,\ddesigned$. Decoder parameters $\ell,s$. Number of errors $\tau$ ($^\ast$decoding radius as in \eqref{eq:decoding_radius_probably_real}). Number of experiments $N$.}
\label{tab:failure_rate}
\centering
{
\renewcommand{\arraystretch}{1.2}
\setlength{\tabcolsep}{5pt}
\begin{tabular}{c|c||c|c|c||c|c||c|c|c||c}
$q$ & $m$ & $n$ & $k$ & $\ddesigned$ & $\ell$ & $s$ & $\tau$ & $\PfailPower$ & $\PfailGS$  & $N$ \\
\hline \hline
$4$ & $15$ &  $64$ & $10$ &  $49$ & $4$ & $2$ & $28$ & $0$ & $0$ & $10^4$ \\
    &      &       &      &       &     &     & $\phantom{{}^\ast}29^\ast$ & $0$ & $3.30 \cdot 10^{-3}$ & $10^4$ \\
    &      &       &      &       &     &     & $30$ & $9.93 \cdot 10^{-1}$ & $9.39 \cdot 10^{-1}$ & $10^4$ \\
\hline
$5$ & $55$ & $125$ & $46$ &  $70$  & $3$ & $2$ & $34$ & $0$ & $0$ & $10^4$ \\
    &      &       &      &      &     &     & $35$ & $0$ & $0$ & $10^4$ \\
    &      &       &      &      &     &     & $\phantom{{}^\ast}36^\ast$ & $0$ & $4.00 \cdot 10^{-4}$ & $10^4$ \\
\hline
$5$ & $20$ & $125$ & $11$ & $105$ & $5$ & $2$ & $67$ & $0$ & $0$ & $10^3$ \\
    &      &       &      &       &     &     & $\phantom{{}^\ast}68^\ast$ & $0$ & $7.00 \cdot 10^{-3}$ & $10^3$ \\
    &      &       &      &       &     &     & $69$ & $9.91 \cdot 10^{-1}$ & $9.60 \cdot 10^{-1}$ & $10^3$ \\
\hline
$7$ & $70$ & $343$ & $50$ & $273$ & $3$ & $2$ & $160$ & $0$ & $0$ & $10^3$ \\
    &      &       &      &       &     &     & $\phantom{{}^\ast}161^\ast$ & $0$ & $0$ & $10^3$ \\
    &      &       &      &       &     &     & $162$ & $9.78 \cdot 10^{-1}$ & $9.86 \cdot 10^{-1}$ & $10^3$ \\
\hline
$7$ & $70$ & $343$ & $50$ & $273$ & $4$ & $2$ & $168$ & $0$ & $0$ & $10^3$ \\
    &      &       &      &       &     &     & $\phantom{{}^\ast}169^\ast$ & $0$ & $0$ & $10^3$ \\
    &      &       &      &       &     &     & $170$ & $9.79 \cdot 10^{-1}$ & $2.2 \cdot 10^{-2}$ & $10^3$ \\
    &      &       &      &       &     &     & $171$ & $1$ & $1$ & $10^3$ \\
\hline 
$7$ & $55$ & $343$ & $35$ & $288$ & $4$ & $2$ & $\phantom{{}^\ast}184^\ast$ & $0$ & $0$ & $10^3$ \\
    &      &       &      &       &     &     & $185$ & $9.82 \cdot 10^{-1}$ & $1.9 \cdot 10^{-2}$ & $10^3$ \\
    &      &       &      &       &     &     & $186$ & $1$ & $1$ & $10^3$
\end{tabular}
}
\end{table}

Table~\ref{tab:failure_rate} presents the simulation results for various code ($q,m,n,k,\ddesigned$), decoder ($\ell,s$), and channel ($\tau$) parameters.
It can be observed that both algorithms can almost always correct
\begin{align}
\tau = n \left[1-\tfrac{s+1}{2(\ell+1)}\right] - \tfrac{\ell}{2s} m - \tfrac{\ell-s+1}{s(\ell+1)} \label{eq:decoding_radius_probably_real}
\end{align}
errors, improving upon classical power decoding. Also, none of the two algorithms is generally superior.

\section{Conclusion}

We have presented a new decoding algorithm for one-point Hermitian codes which is based on the improved power decoder for Reed--Solomon codes from~\cite{nielsen2016power}.
Experimental results indicate that the new algorithm has a similar failure probability as the Guruswami--Sudan algorithm at the same decoding radius.

A generalization of the new algorithm to interleaved one-point Hermitian codes, similar to \cite{puchinger2017decoding}, promises improved decoding radii for interleaving degrees $m>1$ compared to existing decoding algorithms, and is work in progress.

\bibliographystyle{IEEEtran}
\bibliography{main}

\end{document}